\newtheorem{theorem}{Theorem}[section]
\newcommand{\pp}{P}
\newcommand{\var}[3]{#1_{#2}(#3)}
\newcommand{\tx}{s}
\newcommand{\vX}[1]{\var{\boldsymbol{X}}{#1}{\tx}}
\newcommand{\ty}{t}
\newcommand{\Yv}{\var{Y}{\nu_1}{\ty},\dots,\; \var{Y}{\nu_N}{\ty}}
\newcommand{\Xv}[1]{\var{X}{\nu_1,#1}{\tx},\dots,\; \var{X}{\nu_N,#1}{\tx}}
\newcommand{\vPs}[1]{\var{\psi}{#1}{\ty}}
\newcommand{\vPh}[1]{\var{\phi}{#1}{\tx}}
\begin{document}
	
	\title{Developments in Functional regression model for network structured data}
	\author[1]{Elvira Romano\thanks{\texttt{elvira.romano@unicampania.it}}}
	
	\author[1]{Antonio Irpino\thanks{\texttt{antonio.irpino@unicampania.it}}}
	\author[2]{Claire Miller\thanks{\texttt{Claire.Miller@glasgow.ac.uk}}}
	\affil[1]{Department of Mathematics and Physics, University of Campania ``Luigi Vanvitelli'' }
	\affil[2]{School of Mathematics and Statistics, University of Glasgow }
	
	\date{}
	\maketitle
	
	\abstract{
		In this  paper,  we propose a Network-Weighted Functional Regression (NWFR) model, an extension of Spatially Weighted Functional Regression (SWFR) to functional data defined on network-structured settings.
		To asses predictive uncertainity, we develop a functional conformal prediction procedure that yields a distribution free prediction intervals with guaranteed coverage.
		Through extensive evaluation on both simulated and real-world datasets, we demonstrate that the explicit modeling of network structure yields substantive improvements in point-prediction accuracy and markedly enhances the validity and precision of the resulting prediction intervals.
		
	}
	
	{{\bf keywords:} Conformal prediction, Functional data, Network data,  Functional Regression model}
	
	\section{Introduction}
	\label{sec:01} 
	Modern research frequently includes collecting numerous functional observations from various designs with complex dependencies, such as multivariate, longitudinal, spatial, networked, or time series data. These datasets, referred to as second-generation functional data  \cite{Koner:23}, go beyond conventional functional data by integrating intricate relationships between observations, departing from the independence assumption.
	This paper focuses on on appropriately modelling such interconnections, particularly within environments governed by network dependencies, to enable prediction and inference for continuous signals across nodes in a network that accounts for the dependencies induced by connections between nodes. This challenge lies at the intersection of Network Data Analysis \cite{ DeStefano:23, Ward:11} and Functional Data Analysis (FDA) \cite{Ramsay:05}, a field with significant foundational work in functional regression by Faraway \cite{Faraway:97}, Cardot et al. \cite{Cardot:99}, and others \cite{Aneiros-Perez:06, Chiou:04, Febrero-Bande:13, Febrero-Bande:10}.  
	In recent years, various methods have been proposed to model complex dependencies, such as multivariate functional data \cite{Morris:2015} and spatially dependent functional data \cite{Arnone:19, ROMANO44, Sangalli:13, Yamanishi:03}. However, studying interconnected functional entities within a network remains relatively novel \cite{Fontanella:20}.
	We introduce a regression model that utilizes the concept of representing multidimensional time series signals as functions observed at network nodes while incorporating edge weights to capture relationships between nodes. 
	In this context, the network itself represents the measurement process, while the measurements are continuous signals that exist over a spatial domain external to the network. The model integrates network weights to develop localized regression models, where each node’s estimation is influenced by its neighbors. This graph-based method allows for predictions without needing assumptions about the data distribution. To improve the dependability of the forecasts, we apply conformal prediction techniques, offering uncertainty estimates with guaranteed coverage.
	The paper is organized as follows.
	Section \ref{sec:02} defines  Functional Data on Networks.
	Section \ref{sec:03} formalizes the Functional regression model on the network.
	Section \ref{sec:04} presents an inferential framework for defining prediction intervals.
	Sections \ref{sec:05} and \ref{sec:06} detail the results of simulated and real-world environmental case studies.

	\section{Functional Data on Networks}	
	\label{sec:02}
	Communication network data provides a detailed description of how individuals or devices connect and interact within a network. It encompasses critical information such as the number of connections, their strength, and the communication patterns between nodes. 
	
	For example, when we consider the signal information over time as functional data, and consider that communication network data can be represented as a graph, we are dealing with functional data on a network.
	Let  $G = (\mathcal{V}, \mathcal{E})$  be an undirected graph where  $\mathcal{V}$ is the set of $N=|\mathcal{V}|$ vertices (or nodes) in the graph, and  $\mathcal{E} \subseteq \mathcal{V} \times \mathcal{V}$ is the set of edges (or links) connecting pairs of vertices.
	Each vertex  $\nu_i \in \mathcal{V}$ is described by a set of functional data $\vX{\nu_i} = \left( X_{\nu_i,1}(t), X_{\nu_i,2}(t), \dots, X_{\nu_i,\pp}(t) \right)$, where  $t \in T = [a,b]$  represents the time or index domain of the functional data, and $X_{\nu_i,p}(t)$ is a scalar function at time $t$ for the $p$-th component of the data in the vertex $\nu_i$. 
	
	Given the graph $G = (\mathcal{V}, \mathcal{E})$, the functional data at each vertex may be influenced by the data in the neighboring vertices, where the relationship between two vertices in the graph can be captured by their geodesic. For this purpose, we define the geodesic between a pair of vertices $\nu_i$ and $\nu_j$ as the length of the shortest path (or paths) connecting them. In the case of weighted graphs, the shortest path between two vertices is determined by minimizing a function of the edge weights along the connecting path.
	
	If the weights represent a connection, usually a sum of the weights of the edges is considered. If the graph is unweighted, this corresponds to the path, or the paths, with the fewest number of edges connecting $\nu_i$ to $\nu_j$.

	Let $\pi_{i,j}^\ell = \{(\nu_0,\nu_1), (\nu_1,\nu_2), \dots, (\nu_{\ell-1},\nu_\ell)\}$, where $\nu_0=\nu_i$ and $\nu_\ell=\nu_j$, $\ell$ is lesser or equal to the diameter of the network, represents a path of length $\ell$ from $\nu_i$ to $\nu_j$, such that each  $(\nu_h, \nu_{h+1})=e_{h,h+1} \in \mathcal{E}$ is an edge.
	Each edge $e_{h,k}\in\mathcal{E}$ has associated a weight $\omega(h,k)\geq0$. If $\omega(h,k)\geq0$ is considered as a cost measure, the total weight associated with $\pi_{i,j}^\ell$ is given as follows
	$$\omega\left(\pi_{i,j}^\ell\right)=\sum_{h=1}^\ell\omega(\nu_{h-1},\nu_h). $$
	

	The shortest path from $\nu_i$ to $\nu_j$ is the path (or the paths) $\pi_{ij}^*$ that minimizes the total weight among all possible paths from $\nu_i$ to $\nu_j$. Formally, if ${\Pi}_{ij}$ denotes the set of all paths from $\nu_i$ to $\nu_j$, then the shortest path is defined as
	
	$$
	\pi_{ij}^* = \arg \min_{\pi \in {\Pi}_{ij}} \sum_{e_{h,k} \in \pi} \omega(\nu_h, \nu_k),
	$$
	
	and the corresponding distance or geodesic (or minimum weight) is
	
	\begin{equation}\label{shortpath}
		d(i, j) = \min_{\pi \in {\Pi}_{ij}} \sum_{e_{h,k} \in \pi} \omega(\nu_h, \nu_k).    
	\end{equation}
	
	On such definition of geodesic distance depends the functional regression models on graphs, where the functional data at each vertex may be influenced by the data at neighboring vertices, and these dependencies are controlled by the graph structure.

	\section{ Functional Regression Model for network structured data}
	\label{sec:03}
	
	The classical functional concurrent regression model, as described in \cite{Ramsay:05}, is used when the response variable $Y(t)$ is a function of a set of $P$ functional covariates ${X_1(s), X_2(s), \dots, X_P(s)}$, where $t\in \mathcal{T}$, $s\in \mathcal{S}_p$, and $\mathcal{T}$ and $\mathcal{S}_p$, where $p=1,\ldots,P$, represent the domain over which the functions are observed. In this model, the goal is to model the relationship between the response variable $Y(t)$ and the set of functional predictors $X_1(s), X_2(s), \dots, X_P(s)$ . The general form of the classical functional regression model is given by:
	
	\begin{equation}
		Y(t) = \beta_0(t) + \sum_{p=1}^P \int_{\mathcal{S}_p} X_p(s) \beta_p(t, s) \, ds + \epsilon(t), \quad \forall t \in \mathcal{T},
	\end{equation}
	
	where $Y(t)$ is the response variable at time $t$,  $\beta_0(t)$ is the intercept function, $X_p(s)$ are the functional covariates observed over the domain  $\mathcal{S}_p$, $\beta_p(t, s)$ are the coefficient functions corresponding to each functional covariate  $X_p(s)$, and $\epsilon(t)$ represents the error term, typically assumed to be a random noise process.
	
	This model assumes that the observations $Y(t)$ are related to the functional predictors $X_p(s)$ through a linear combination of the coefficient functions $\beta_p(t, s)$, integrated over the domain $\mathcal{S}_p$. The model we propose extends this model to situations where the data are observed over a network structure, such as a sensor network or spatial locations connected by edges. In this framework, both the response variable and the functional covariates are observed at vertices (locations) of the network, and the relationships between vertices are captured by a weight matrix that incorporates the network structure into the regression analysis.
	
	Extending the GWFR \cite{Yamanishi:03} to network data, the NWFR model incorporates the network structure by introducing a weight matrix $ \mathbf{W}$, a similarity matrix where usually $w(i,j)\in[0,1]$ and is symmetric for undirected networks, which represents the proximity between vertices $ \nu_i $ and $\nu_j$. The elements of $\mathbf{W}$ depend on the geodesics and reflect the connectivity structure of the network. The NWFR model estimates a weighted concurrent regression model for each vertex $\nu_i$ and is given by the following equation:
	
	\begin{equation}\label{eq:mod}
		Y_{\nu_i}(t) = \beta_{\nu_i,0}(t) + \sum_{p=1}^P \int_{\mathcal{S}_p} X_{\nu_i,p}(s) \beta_{\nu_i,p}(t, s) \, ds + \epsilon_{\nu_i}(t), \quad i = 1, \dots, N \; \forall t\in \mathcal{T},
	\end{equation} where $Y_{\nu_i}(t)$  is the response variable at vertex  $\nu_i$ , $\beta_{\nu_i,0}(t)$ is the intercept function at vertex $\nu_i$, $X_{\nu_i,p}(s)$ are the functional covariates observed at vertex $\nu_i$, $\beta_{\nu_i,p}(t, s)$ are the coefficient functions corresponding to each covariate $X_{\nu_i,p}(s)$ and
	$\epsilon_{\nu_i}(t)$ is the error term at vertex $\nu_i$.
	
	For each vertex, the model depends on a diagonal weight matrix $\mathbf{W}_{\nu_i}$, where the element $w_{\nu_i}(j,j)$ ($j=1,\ldots,N$) represents the proximity between vertex $\nu_i$ and $\nu_j$ in the network. The matrix is defined as:
	
	\begin{equation}
		\label{pesi}
		\mathbf{W}_{\nu_i} = [w(i,1), w(i,2), \dots, w(i,N)]^T \mathbf{I}_N;
	\end{equation}
	such that
	\[
	w_{\nu_i}(h,k) =
	\begin{cases}
		0 & \text{if } h \neq k, \\
		w(i,j) & \text{if } h = k = j.
	\end{cases}
	\]
	
	In this paper, the similarity value $w_{\nu_i}(j,j)=w(i,j)$ between vertices $\nu_i$ and $\nu_j$ is computed similarly to \cite{Yamanishi:03}, using the Gaussian descending distance kernel function, a kernel function which is based on the geodesic distance between the vertices, as follows:
	
	\begin{equation}\label{kern}
		w(i,j) = \exp{\left\{-\frac{1}{2}\left(\frac{d_{i,j}}{\theta}\right)^2\right\}},
	\end{equation}
	
	where $d_{i,j}$ is geodesic distance as defined in Eq. (\ref{shortpath}) between vertices $\nu_i$ and $\nu_j$, and $\theta>0$ is a bandwidth parameter selected through cross-validation\cite{Yamanishi:03}. Eq. (\ref{kern}) implies that the more two vertices are far each other the closer to $0$ is the weight. To estimate the functional coefficients $\beta_{\nu_i,p}(t, s)$, we can use basis expansion. Let's represent our curves as linear combinations of basis functions. We approximate the functional data using a set of centered basis functions. 
	
	Let $\mathbf{\Psi}(t) = \left[ \vPs{1}, \dots, \vPs{K}\right]$ be a row vector of basis functions. Similarly, let $\mathbf{\Phi}_{1}(\tx) = \left[\vPh{1,1}, \dots, \vPh{1,K_1}\right]$ be a row vector of basis functions with dimensions $1 \times K_1$, $\mathbf{\Phi}_{2}(\tx) = \left[\vPh{2,1}, \dots, \vPh{2,K_2}\right]$ with dimensions $1 \times K_2$, and so on up to $\mathbf{\Phi}_{\pp}(\tx) = \left[\vPh{\pp,1}, \dots, \vPh{\pp,K_\pp}\right]$ with dimensions $1 \times K_\pp$.
	Let $\mathbf{Y}(t) = \left[\Yv\right]^T$ be the column vector of response variables with dimensions $N \times 1$, and $\mathbf{X}_{p}(s) = \left[\Xv{p}\right]^T$ be the column vector of covariate variables with dimensions $N \times 1$, where $p$ ranges from $1$ to $P$. The curves are expanded as ${X}_{\nu_i,p}(s) = \mathbf{x}_{\nu_i,p} \mathbf{\Phi}_p(\tx)^T$  where $\mathbf{x}_{\nu_i,p}=[x_{\nu_i,p,1},x_{\nu_i,p,2},\ldots,x_{\nu_i,p,K_p}]$ are the components (the scalar coefficients) of the basis in the $\mathbf{\Phi_p(s)}$ vector; ${Y_{\nu_i}(t) = \mathbf{y}_{\nu_i} \mathbf{\Psi}(\ty)^T}$ where $\mathbf{y}_{\nu_i}=[y_{\nu_i,1},y_{\nu_i,2},\ldots,y_{\nu_i,K}]$ are the components (the scalar coefficients) of the basis in the $\mathbf{\Psi}(t)$ vector.
	
	Let's 
	$$\mathbf{X}_p=\left[\begin{matrix}
		x_{\nu_1,p,1} & x_{\nu_1,p,2} & \cdots&x_{\nu_1,p,K_p}\\
		x_{\nu_2,p,1} & x_{\nu_2,p,2} & \cdots&x_{\nu_2,p,K_p}\\
		\cdots & \cdots & \cdots&\cdots\\
		x_{\nu_N,p,1} & x_{\nu_N,p,2} & \cdots&x_{\nu_N,p,K_p}
	\end{matrix}\right]$$
	and 
	$$\mathbf{Y}=\left[\begin{matrix}
		y_{\nu_1,1} & y_{\nu_1,2} & \cdots&y_{\nu_1,K}\\
		y_{\nu_2,1} & y_{\nu_2,2} & \cdots&y_{\nu_2,K}\\
		\cdots & \cdots & \cdots&\cdots\\
		y_{\nu_N,1} & y_{\nu_N,2} & \cdots&y_{\nu_N,K}
	\end{matrix}\right].$$
	
	The estimation procedure of $\beta_{\nu_i,p}(\tx, \ty) = \mathbf{\Phi}_p(\tx) \mathbf{B}_{\nu_i,p} \mathbf{\Psi}(\ty)^T$ for each $p = 1, \dots, \pp$,
	for the vertex $\nu_i$, require to solve the following equation:
	\begin{equation}		
		\label{eq:04}
		(\textbf{X}_p \textbf{J}_{\mathbf{\Phi}_p})^T \mathbf{W}_{\nu_i} \left( \sum_{j=1}^{\pp} \mathbf{X}_j \textbf{J}_{\mathbf{\Phi}_j} \mathbf{B}_{\nu_i,j} \right) \textbf{J}_{\mathbf{\Psi}} = (\mathbf{X}_p \textbf{J}_{\mathbf{\Phi}_p})^T \mathbf{W}_{\nu_i} \mathbf{Y} \,\textbf{J}_{\mathbf{\Psi}}, \quad \forall i = 1, \dots, N,
	\end{equation}
	where $\textbf{J}_{\mathbf{\Phi}_p} = \int {\mathbf{\Phi}}_p^T(t) {\mathbf{\Phi}}_p(t) dt$, a $K_p\times K_p$ matrix, and $\textbf{J}_{\mathbf{\Psi}} = \int {\mathbf{\Psi}}^T(s) {\mathbf{\Psi}}(s) ds$, a $K\times K$ matrix. 
	
	
	Let's define the matrices:
	
	$$\mathbb{X}=\left[\begin{matrix}
		\mathbf{X}_{1}\mathbf{J}_{\mathbf{\Phi}_1}& \mathbf{X}_{2}\mathbf{J}_{\mathbf{\Phi}_2} & \cdots&\mathbf{X}_{P}\mathbf{J}_{\mathbf{\Phi}_P}
	\end{matrix}\right];$$
	of dimensions $N\times\sum_{p=1}^PK_p$, and 
	the matrix:
	$$\mathbb{B}_{\nu_i}=\left[\begin{matrix}
		\mathbf{B}_{\nu_i,1} \\ \mathbf{B}_{\nu_i,2}\\\cdots\\\mathbf{B}_{\nu_i,P}
	\end{matrix}\right];$$
	of dimension $\left(\sum_{p=1}^PK_p\right)\times K$.
	
	\begin{theorem}
		Eq. (\ref{eq:04}) allows as solution:
		\begin{equation}\label{bigsol}
			\mathbb{B}_{\nu_i}=(\mathbb{X}^T\mathbf{W}_{\nu_i}\mathbb{X})^{-1}\mathbb{X}^T\mathbf{W}_{\nu_i}\mathbf{Y}\;;
		\end{equation}
		where $\mathbf{B}_{\nu_i,p}$ is the $p$-th block of $\mathbb{B}$.
	\end{theorem}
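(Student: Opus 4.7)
The plan is to recognize that the system (\ref{eq:04}), indexed over $p=1,\dots,P$, is really a single block-matrix normal equation in disguise, and then invert. First I would rewrite the inner sum using the block notation already introduced: since $\mathbb{X}=[\mathbf{X}_1\mathbf{J}_{\mathbf{\Phi}_1}\mid\cdots\mid \mathbf{X}_P\mathbf{J}_{\mathbf{\Phi}_P}]$ and $\mathbb{B}_{\nu_i}$ is the vertical concatenation of the $\mathbf{B}_{\nu_i,p}$, block-row times block-column multiplication gives
\[
\sum_{j=1}^{P} \mathbf{X}_j\mathbf{J}_{\mathbf{\Phi}_j}\mathbf{B}_{\nu_i,j} \;=\; \mathbb{X}\,\mathbb{B}_{\nu_i},
\]
so that for each $p$ equation (\ref{eq:04}) becomes
\[
(\mathbf{X}_p\mathbf{J}_{\mathbf{\Phi}_p})^{T}\mathbf{W}_{\nu_i}\,\mathbb{X}\,\mathbb{B}_{\nu_i}\,\mathbf{J}_{\mathbf{\Psi}} \;=\; (\mathbf{X}_p\mathbf{J}_{\mathbf{\Phi}_p})^{T}\mathbf{W}_{\nu_i}\,\mathbf{Y}\,\mathbf{J}_{\mathbf{\Psi}}.
\]

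Next I would stack these $P$ equations. The $p$-th equation is a $K_p\times K$ block; stacking them vertically replaces the prefactor $(\mathbf{X}_p\mathbf{J}_{\mathbf{\Phi}_p})^{T}$ by the block-column $[\mathbf{X}_1\mathbf{J}_{\mathbf{\Phi}_1}\mid\cdots\mid\mathbf{X}_P\mathbf{J}_{\mathbf{\Phi}_P}]^{T}=\mathbb{X}^{T}$. The stacked system is therefore the single matrix identity
\[
\mathbb{X}^{T}\mathbf{W}_{\nu_i}\,\mathbb{X}\,\mathbb{B}_{\nu_i}\,\mathbf{J}_{\mathbf{\Psi}} \;=\; \mathbb{X}^{T}\mathbf{W}_{\nu_i}\,\mathbf{Y}\,\mathbf{J}_{\mathbf{\Psi}},
\]
and conversely any $\mathbb{B}_{\nu_i}$ satisfying this identity satisfies the original $P$ block equations, because reading off the $p$-th block-row of the stacked identity recovers (\ref{eq:04}).

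To finish I would cancel the Gram matrices from the right and the left. Because $\mathbf{J}_{\mathbf{\Psi}}$ is the Gram matrix of a linearly independent basis $\mathbf{\Psi}$, it is symmetric positive definite and hence invertible; right-multiplying by $\mathbf{J}_{\mathbf{\Psi}}^{-1}$ removes it from both sides. Assuming further that $\mathbb{X}^{T}\mathbf{W}_{\nu_i}\mathbb{X}$ is non-singular (which is the usual identifiability condition for weighted least squares, requiring that $\mathbb{X}$ has full column rank and that $\mathbf{W}_{\nu_i}$ has enough strictly positive diagonal entries), left-multiplying by its inverse yields exactly
\[
\mathbb{B}_{\nu_i} \;=\; (\mathbb{X}^{T}\mathbf{W}_{\nu_i}\mathbb{X})^{-1}\mathbb{X}^{T}\mathbf{W}_{\nu_i}\mathbf{Y},
\]
which is (\ref{bigsol}); the claim about $\mathbf{B}_{\nu_i,p}$ then follows by reading off the $p$-th block-row.

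The genuinely delicate step is the stacking: one must check that the block-row of $\mathbb{X}^{T}$ indexed by $p$ is precisely $(\mathbf{X}_p\mathbf{J}_{\mathbf{\Phi}_p})^{T}$, so that the $P$ separate equations assemble into a single consistent matrix equation rather than an over-determined one; everything else is algebra that requires only the invertibility hypotheses on $\mathbf{J}_{\mathbf{\Psi}}$ and $\mathbb{X}^{T}\mathbf{W}_{\nu_i}\mathbb{X}$, which are standard and should be mentioned explicitly.
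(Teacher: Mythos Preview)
Your proposal is correct and follows essentially the same approach as the paper: rewrite the sum as $\mathbb{X}\mathbb{B}_{\nu_i}$, stack the $P$ block equations into the single normal equation $\mathbb{X}^{T}\mathbf{W}_{\nu_i}\mathbb{X}\,\mathbb{B}_{\nu_i}=\mathbb{X}^{T}\mathbf{W}_{\nu_i}\mathbf{Y}$, and invert under the stated nonsingularity assumptions. The only cosmetic difference is ordering---the paper cancels $\mathbf{J}_{\mathbf{\Psi}}$ before stacking while you stack first and cancel afterward---and your added remarks on why $\mathbf{J}_{\mathbf{\Psi}}$ and $\mathbb{X}^{T}\mathbf{W}_{\nu_i}\mathbb{X}$ are invertible are welcome clarifications.
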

	
	\begin{proof}
		One way to “decouple” the Eq. (\ref{eq:04}) is to note that it holds for every group (covariate) $p$. In particular, we have
		
		\[
		(\textbf{X}_p \textbf{J}_{\mathbf{\Phi}_p})^T \mathbf{W}_{\nu_i} \left( \sum_{j=1}^{\pp} \mathbf{X}_j \textbf{J}_{\mathbf{\Phi}_j} \mathbf{B}_{\nu_i,j} \right) \textbf{J}_{\mathbf{\Psi}} = (\mathbf{X}_p \textbf{J}_{\mathbf{\Phi}_p})^T \mathbf{W}_{\nu_i} \mathbf{Y} \,\textbf{J}_{\mathbf{\Psi}}, \quad  \;\forall p=1,\dots,P.
		\]
		
		We assume that $\mathbf{J}_{\mathbf{\Psi}}$ is invertible, then we cancel it on the right so that
		
		\[
		(\textbf{X}_p \textbf{J}_{\mathbf{\Phi}_p})^T \mathbf{W}_{\nu_i} \left( \sum_{j=1}^{\pp} \mathbf{X}_j \textbf{J}_{\mathbf{\Phi}_j} \mathbf{B}_{\nu_i,j} \right) = (\mathbf{X}_p \textbf{J}_{\mathbf{\Phi}_p})^T \mathbf{W}_{\nu_i} \mathbf{Y}\,.
		\]
		
		Now, note that
		
		\[
		\sum_{j=1}^{\pp} \mathbf{X}_j \mathbf{J}_{\mathbf{\Phi}_j}\mathbf{B}_{\nu_i,j}  = \mathbb{X} \mathbb{B}_{\nu_i}.
		\]
		
		Furthermore, since the above equation must hold for every covariate $p$, it is equivalent (when “stacked” over all covariates) to
		
		\[
		\mathbb{X}^T \mathbf{W}_i\,\mathbb{X}\,\mathbb{B}_{\nu_i} = \mathbb{X}^T \mathbf{W}_i\,\mathbf{Y}\,.
		\]
		
		Assuming that $\mathbb{X}^T \mathbf{W}_i \mathbb{X}$ is invertible, we solve for $\mathbb{B}_{\nu_i}$:
		
		\[
		\mathbb{B}_{\nu_i}=(\mathbb{X}^T\mathbf{W}_{\nu_i}\mathbb{X})^{-1}\mathbb{X}^T\mathbf{W}_{\nu_i}\mathbf{Y}\;.
		\]
		
		Once we obtain $\mathbb{B}_{\nu_i}$ (which is of size $\left(\sum_{p=1}^P K_p\right) \times K$), we can extract each block $\mathbf{B}_{\nu_i,p}$.
	\end{proof}

	then $\beta_{\nu_i,p}(t,s)$ in Eq. (\ref{eq:mod}) is given as follows:
	\begin{equation}
		\beta_{\nu_i,p}(t,s)=\mathbf{\Phi}_p(\tx) \mathbf{B}_{p,i} \mathbf{\Psi}(\ty)^T.  
	\end{equation}

	The weight matrix $\textbf{W}_{\nu_i}$ is a diagonal matrix as defined in \ref{kern}.
	It is worth noting that, when the weight matrix $\mathbf{W} = \mathbf{1}_N$ each $\mathbf{W}_{\nu_i}=\mathbf{I}_N$. It implies that all the vertices are equally weighted and the network is fully connected, meaning that the network structure does not influence the model. In this case, all the vertices models are equal each other and to the standard concurrent functional regression model.

	\subsection{Goodness of fit of the model and tests on model coefficients}
	
	To evaluate the goodness of fit of the model, we propose using the classical Root Integrated Mean Squared Error (RIMSE) index, the extensions of the classical $R^2$ measure to functional regression \cite{Yao_mull_05}, and its functional version. The RIMSE measures the mean squared difference between observed and predicted functions integrated across the domain, and is defined as:
	\[
	\text{RIMSE} = \sqrt{\frac{1}{N} \sum_{i=1}^{N} \int_T \left( Y_{\nu_i}(t) - \hat{Y}_{\nu_i}(t) \right)^2 dt}.
	\]
	
	The point-wise $R^2$ index is defined as:
	\[
	R^2(t) = 1 - \frac{\sum_{i=1}^{N} \left( Y_{\nu_i}(t) - \hat{Y}_{\nu_i}(t) \right)^2}{\sum_{i=1}^{N} \left( Y_{\nu_i}(t) - \bar{Y}(t) \right)^2},
	\]
	with $\bar{Y}(t) = \frac{1}{N} \sum_{i=1}^{N} Y_{\nu_i}(t)$.
	
	The integral of the point-wise $R^2$ index, called $\widetilde{R}^2$, is defined as follows:
	\[
	\widetilde{R}^2 = 1 - \frac{\int_T \sum_{i=1}^{N} \left( Y_{\nu_i}(t) - \hat{Y}_{\nu_i}(t) \right)^2 dt}{\int_T \sum_{i=1}^{N} \left( Y_{\nu_i}(t) - \bar{Y}(t) \right)^2 dt}.
	\]

	To test the network structure effect on the variability of the coefficients, we perform a test based on the integrated variance for varying \( \nu_i \).  The main aim is to test the Null Hypothesis that the coefficients \( \hat{\beta}_{\nu_i,k}(t,\,s) \) are stable across the network.
	Formally, this can be expressed as:
	\[
	H_0: \hat{\beta}_{\nu_i,k}(t,\,s) = \hat{\beta}_{\nu_j,k}(t,\,s) \quad 
	\]

	Following \cite{DeBellefonFloch:18}, \cite{Brunsdon:98} and \cite{Yamanishi:03}, We assess the variability of the coefficients \( \hat{\beta}_{\nu_i,k}(t,\,s) \) across different nodes \( \nu_i \) for a fixed \( k \) using the statistic:
	
	\begin{equation}
		v_k = \frac{1}{N} \sum_{i=1}^N \int\int\left(\hat{\beta}_{\nu_i,k}(t, s) - \bar{\beta}_k(t, s) \right)^2 \, dt \, ds,
	\end{equation}

	where \( \bar{\beta}_k(t,\,s) \) represents the mean coefficient function across all nodes, defined as:
	\begin{equation}
		\bar{\beta}_k(t, s) = \frac{1}{N} \sum_{i=1}^{N} \hat{\beta}_{\nu_i,k}(t, s).
	\end{equation}

	A high value of \( v_k \) suggests a significant variation in the coefficients \( \hat{\beta}_k(t,\,s, \nu_i) \) across the nodes, potentially indicating non-stationarity. If the network structure does not influence the coefficient distribution, permuting the node labels should not affect the variance.
	
	To formally test this hypothesis, we conduct Monte Carlo permutation test. We randomly permute the node labels $1,000$ times and compute $1,000$ realizations of the network variance statistic. By comparing the observed statistic \( v_k \) with the null distribution obtained from these permutations, we can assess, through the permutation p-value, whether the network structure significantly impact each $\beta_k$ variability.

	\section{Conformal Inference background}
	\label{sec:04}
	
	Conformal prediction (CP) is a versatile, model-independent framework designed to generate prediction intervals. The approach involves training a model on part of the data, using a held-out calibration set to evaluate prediction errors, and then constructing prediction intervals for new data by applying a quantile of those errors. These intervals are guaranteed to provide coverage, assuming that the data are exchangeable \cite{vovk2005algorithmic}.
	
	In FDA setting \cite{ajroldi2023conformal} and\cite{Diquigiovanni:22} extend SCP to functional responses using a supremum-based nonconformity score computed over a held-out calibration set, where the quantile of these scores determines a uniform threshold. The resulting prediction band for a new input is then constructed by expanding the predicted function uniformly across the domain.
	
	However, this approach depends  on the exchangeability assumption, an assumption often violated in many real‑world domains, particularly those related to spatial processes or network structures with complex dependencies.
	
	To address these limitations, recent research has extended CP to accommodate structured dependencies, including spatial \cite{Mao02042024}, and network-based settings \cite{Cher:2021,lunde2023_2,lunde2023}.
	
	In spatial context, \cite{Mao02042024} address the challenge of predicting at unobserved spatial locations without relying on potentially misspecified model assumptions (like stationarity). They observe that, under an infill asymptotic regime, spatial observations become locally approximately exchangeable for a broad class of processes. Building on this, they develop a local spatial conformal prediction algorithm that constructs valid, nonparametric prediction intervals purely from the data’s empirical variability, without assuming any particular spatial model.
	
	In network settings, observations are interdependent through the graph’s topology, so standard independence assumptions break down and models must explicitly incorporate these connections. 
	Recent advances have shown that conformal prediction can be adapted to maintain valid coverage even when exchangeability is violated in such graph‐structured data \cite{Barber23}.
	Building on this, Lunde et al. \cite{lunde2023} show that further improvements in interval validity and efficiency can be achieved by using network‐aware sampling and stratification techniques that explicitly leverage the underlying graph structure.

	By leveraging the network‐aware sampling and stratification techniques of \cite{lunde2023}, we extend the SPC framework to functional data indexed by graph vertices. Specifically, our method constructs distribution‑free, finite‑sample valid prediction bands for the functional responses at each node by embedding the graph’s dependency structure into the SCP procedure, thereby preserving its formal coverage guarantees.
	
	
	\subsection{Split Conformal Prediction for NWFR}
	
	Predicting an entire functional response at an unobserved node in a network introduces two central challenges: first, uncertainty must be quantified uniformly over the domain of the function; second, the usual assumption of exchangeability is typically violated due to dependencies induced by the graph topology.
	
	To address these issues, given a miscoverage level \(\alpha \in (0,1)\), we construct a prediction band \( C \subset \mathcal{L}_2(T)^p \times \mathcal{L}_2(T) \) such that, for a new vertex \( \nu_j \in D \), the pair \( (X_{\nu_j}, Y_{\nu_j}) \) falls within the band with probability at least \( 1 - \alpha \). That is,
	\[
	\mathbb{P}\left\{ Y_{\nu_j} \in C (X_{\nu_j}) \right\} \geq 1 - \alpha.
	\]
	This probability is taken with respect to a new vertex \( \nu_j \), sampled under the same data-generating process as the calibration set. While observations in a network are not fully independent, we assume approximate exchangeability or conditional independence, achieved through network-aware sampling or stratification techniques.
	
	Let \(D\) denote the set of vertices in the network, and for each \(\nu_i\in D\) let \((X_{\nu_i},Y_{\nu_i})\) denote the pair of functional covariates \(X_{\nu_i}\in\mathcal{L}_2(T)^p\) and response curve \(Y_{\nu_i}\in\mathcal{L}_2(T)\).  Although these pairs are not strictly independent due to the network topology, we assume there exists a partition \(G_1,\dots,G_K\) of \(D\) (e.g.\ via network‐aware stratification) such that, within each \(G_k\),
	\[
	\bigl\{(X_{\nu_i},Y_{\nu_i}):\nu_i\in G_k\bigr\}
	\]
	is exchangeable.  Formally, for any \(m\ge1\), any distinct \(\nu_{i_1},\dots,\nu_{i_m}\in G_k\), and any permutation \(\pi\) of \(\{1,\dots,m\}\),
	\[
	\bigl((X_{\nu_{i_1}},Y_{\nu_{i_1}}),\dots,(X_{\nu_{i_m}},Y_{\nu_{i_m}})\bigr)
	\;\overset{d}{=}\;
	\bigl((X_{\nu_{i_{\pi(1)}}},Y_{\nu_{i_{\pi(1)}}}),\dots,(X_{\nu_{i_{\pi(m)}}},Y_{\nu_{i_{\pi(m)}}})\bigr).
	\]
	By ensuring that at most one vertex from each \(G_k\) enters the calibration set, we recover the exchangeability needed for valid conformal prediction of the curve \(Y_{\nu_j}(t)\) at a new vertex \(\nu_j\) based on its covariates \(X_{\nu_j}\).

	
	We then construct the conformal prediction band for the curve \(Y_{\nu_j}(t)\) at vertex \(\nu_j\) as:
	\[
	C(X_{\nu_j}) = \left\{ Y_{\nu_j}(t) \in \mathcal{L}_2(T) :  \hat{Y}_{\nu_j}(t) - k^S S(t) \leq Y_{\nu_j}(t) \leq \hat{Y}_{\nu_j}(t) + k^S S(t) \right\},
	\]
	where:
	\begin{itemize}
		\item \(\hat{Y}_{\nu_j}(t)\) is the predicted response curve for the unobserved vertex \(\nu_j\),
		\item \(S(t)\) is a modulation function used to allow the band width to vary over the domain \(T\).
		
		\item \(\mathcal{D}_{h}\) denotes a chosen discrepancy or non-conformity measure between the predicted and observed functional responses,
		\item \(k^S\) is the band radius, determined as the \((1 - \alpha)\)-quantile of the empirical distribution of non-conformity scores \(\{R_i : i = T+1, \dots, N\}\),
		\item \(R_i = \mathcal{D}_{h}(\hat{Y}_{\nu_i}, Y_{\nu_i})\) is the non-conformity score for calibration point \(\nu_i\),
	\end{itemize}

	In conformal prediction, the width of the prediction band should ideally be modulated according to the local variability of the data. This modulation can be achieved using a modulation function $S(t)$. Adapting the parameter $k^S$ accordingly, the prediction band can be dynamically expanded or narrowed based on local variability.
	
	For this reason, the non-conformity measures are defined with respect to a modulation function \( S(t) \), which is computed using a calibration set \( \mathcal{Z}_{\mathrm{cal}} \). This set is a subset of the data that is held out from training and is instead used to quantify prediction uncertainty by evaluating discrepancies between predicted and observed responses.

	Formally, the modulation function can be defined as:
	as:
	\[
	S(t) := \sqrt{\frac{1}{|\mathcal{Z}_{\mathrm{cal}}|} \sum_{\nu_k \in \mathcal{Z}_{\mathrm{cal}}} \left( \hat{Y}_{\nu_k}(t) - Y_{\nu_k}(t) \right)^2}.
	\]

	A key ingredient in conformal prediction is the nonconformity measure \cite{pmlr-v204-kato23a}, which quantifies how unusual or "nonconforming" a new observation is relative to the calibration data. In our approach, the nonconformity score incorporates the modulation function \( S(t) \) , allowing the prediction bands to adapt to local heterogeneity in prediction error across the domain.
	
	Considering a general non-conformity score \(\mathcal{D}_{\#}\) defined as:
	\[
	\mathcal{D}_{h}(\hat{Y}_{\nu_j}, Y_{\nu_j}) := \left( \int_T \left( \frac{\hat{Y}_{\nu_j}(t) - Y_{\nu_j}(t)}{S(t)} \right)^h dt \right)^{1/h}.
	\]
	where \( h \geq 1 \) is a parameter controlling the sensitivity of the measure to large deviations. 
	We propose to use \( h = 2 \), that corresponds to the \( L^2 \)-norm which averages the squared errors across the domain.
	
	In the functional literature \cite{Diquigiovanni:22} propose to use a non-conformity measure defined by \( h \to \infty \). It is a supremum-based measure  focusing on the largest error at any point in the domain \( T \), capturing the worst-case discrepancy between the predicted and observed values.
	
	Our choice of using a non-conformity measure that averages the squared errors across the domain aligns with the general principles outlined in \cite{lei2018distribution}and \cite{Mao02042024} and is supported by both theoretical and practical considerations. The squared-error structure corresponds to classical regression diagnostics (e.g., RMSE), enhancing interpretability and comparability with standard approaches.
	The \( L^2 \)-norm offers a balanced trade-off between robustness to local fluctuations and sensitivity to broader structural deviations. It is a smooth and differentiable functional, which facilitates both optimization and theoretical analysis. Moreover, the integral formulation aggregates discrepancies across the entire domain, ensuring that the score reflects global misfit rather than isolated anomalies. In addition compared to high-order norms \( (h \gg 2) \), the use of \( h = 2 \) mitigates numerical instability and variance inflation in finite-sample settings, particularly on discretized or irregular domains.  
	The complete procedure can be described as follows:
	\begin{enumerate}
		\item \textbf{Data Partitioning.}  
		Split the graph‐indexed dataset into a \emph{training set} (for model fitting) and a \emph{calibration set} (for non‐conformity and modulation).  To preserve network topology, first detect communities -groups of nodes within the graph that are more densely connected- using, for example, the Louvain algorithm \cite{Blondel_2008}, and then sample nodes within each community.
		
		\item \textbf{Model Training.}  
		Fit the network‐weighted functional regression (NWFR) model on the training set, using the kernel in Eq.~(\ref{kern}) with network distances (e.g.\ shortest‐path or geodesic) to define neighborhoods.  Denote the fitted predictor by \(\hat Y_{\nu}(t)\).
		
		\item \textbf{Modulation Function.}  
		On the calibration set \(\mathcal{Z}_{\mathrm{cal}}\), compute the local variability
		\[
		S(t) \;=\; \sqrt{\frac{1}{|\mathcal{Z}_{\mathrm{cal}}|} \sum_{\nu_k\in\mathcal{Z}_{\mathrm{cal}}}
			\bigl(\hat Y_{\nu_k}(t)-Y_{\nu_k}(t)\bigr)^2}\,,
		\]
		which will modulate the band width pointwise over \(T\).
		
		\item \textbf{Non-Conformity Measure.}  
		For each calibration vertex \(\nu_i\), define the non-Conformity measure
		\[\mathcal{D}_2\bigl(\hat Y_{\nu_i},Y_{\nu_i}\bigr)
		:=\sqrt{\int_T
			\Bigl(\tfrac{\hat Y_{\nu_i}(t)-Y_{\nu_i}(t)}{S(t)}\Bigr)^2\,dt}\,.
		\]
		
		\item \textbf{Prediction Band Construction.}  
		Let \(k^S\) be the \((1-\alpha)\)-quantile of the calibration scores \(\{R_i\}\).  For a new vertex \(\nu_j\), form the band
		\[
		C_{\mathrm{conf}}(X_{\nu_j})
		= \bigl\{\,y(t):\;\hat Y_{\nu_j}(t)\pm k^S\,S(t)\bigr\},
		\]
		which guarantees \(\mathbb{P}\bigl((X_{\nu_j},Y_{\nu_j}(\cdot))\in C_{\mathrm{conf}}(X_{\nu_j})\bigr)\ge1-\alpha\).
	\end{enumerate}

	
	
	
	
	
	
	\[
	\]

	\subsection{CP performance evaluation}
	The performance evaluation of a conformal prediction method typically involves assessing the \textit{validity} and \textit{efficiency} of the prediction intervals generated by the method. These two aspects are generally opposed meaning that high coverage can be obtained when intervals are wide, but this means that a gain in validity could be generally related to a loss of efficiency. This problem in CP is well-known as the \textit{validity-efficiency} trade-off \cite{pmlr-v204-kato23a}. For this reason, we will use \cite{Diana:23} the following indices:  
	\begin{itemize}
		\item 
		a global empirical coverage index $Cov_G$, a measure of \textit{Global} validity, counting the proportion of observed curves completely contained in the prediction band. Formally, being $PB_i= [I_{l_i}, I_{u_i}]$ the prediction band for the $i-th$ observation, $Cov_G$ is defined as follows
		\begin{equation}\label{eq:GLO_cov}					
			Cov_G = \frac{1}{N}\sum_i\mathbb{I}\left( Y_{\nu_i}\in PB_i \right),					
		\end{equation}			
		where
		\[
		\mathbb{I}\left( Y_{\nu_i} \in PB_i \right) = 
		\begin{cases} 
			1 & \text{if } I_{l_i}(t) \leq Y_{\nu_i}(t) \leq I_{u_i}(t)\;\; \forall t\in[0,1]\\
			0 & \text{otherwise}
		\end{cases}\,\,;
		\]
		$Cov_G$ assumes values in $[0,1]$, and the more it is close to $1$ the more the CP covers all the observed dependent data;
		\item 
		since $Cov_G$ has the strong requirement coverage for each curve, namely the curve must completely covered by the prediction band, in order to soften such requirement, we propose a new measure of \textit{Local} coverage, $Cov_L$, which quantifies the proportion of the function domain satisfying $ Y_{\nu_i}(t)\in PB_i(t)$. $Cov_L$ is defined as follows:

		\begin{equation}\label{eq:loc_cov}					
			Cov_L = \frac{1}{N}\sum_i\int_T\mathbb{I}\left( Y_{\nu_i}(t)\in PB_i(t) \right)dt					
		\end{equation}			
		where 
		
		\[
		\mathbb{I}\left( Y_{\nu_i}(t) \in PB_i(t) \right) = 
		\begin{cases} 
			1 & \text{if } I_{l_i}(t) \leq Y_{\nu_i}(t) \leq I_{u_i}(t) \\
			0 & \text{otherwise}
		\end{cases}\,\,;
		\]
		similar to $Cov_G$, $Cov_L$ is defined in $[0,1]$ and usually $Cov_L\geq Cov_G$;
		\item 
		the \textit{Average Bandwidth}, defined as follows: 
		\begin{equation}\label{eq:ABW}
			ABW=\frac{1}{N}\sum_i\int_T\left(I_{u_i}(t) - I_{l_i}(t)\right)dt\, ;
		\end{equation}
		returns the average of average bandwidths for each observation and can be considered a measure of efficiency of the CP method;
		\item 	
		the functional version of the Interval Score $S^{int}_{\alpha}$  \cite{Gneiting:07}, defined as:
		\begin{equation}\label{eq:11}
			S^{int}_{\alpha}=\frac{1}{N}\sum_{i} \int_TS^{int}_{\alpha\,i} (t)dt,
		\end{equation}
		where
		\begin{equation}
			S^{int}_{\alpha\,i}(t)=\left[I_{u_i}(t) - I_{l_i}(t)\right] + \frac{2}{\alpha}\left[I_{l_i}(t)-Y_{\nu_{i}}(t)\right]_+ + \frac{2}{\alpha}\left[Y_{\nu_{i}}(t) - I_{u_i}(t)\right]_+ 
		\end{equation}
		where $[I_l(t), I_u(t)]$ is the prediction band, $Y^s(t)$ contains the observations $Y_{\nu_1}(t),\dots,Y_{\nu_{N}}(t)$, and $z_+=z\lor 0$ denotes the “positive part”. The minimum of $S^{int}_{\alpha}$ is equal to the $ABW$ index when the data are completely contained within the band; otherwise, $S^{int}_{\alpha}$ measures the distance of the data from the band. A smaller $S^{int}_{\alpha}$ is desirable as this rewards both high coverage and narrow intervals. In the remainder of the paper we consider a value of $\alpha=0.05$.
		
	\end{itemize}

	\section{Simulation study}
	\label{sec:05}
	
	The synthetic network structure consists of four interconnected communities of nodes, forming a single connected graph, which are generated using a stochastic block model \cite{sbm} having weighted links \cite{Wsbm}. Each node is related to a pair of functional data: one representing a response variable and the other a covariate, which together capture the functional attributes of the nodes. To ensure robust evaluation and reliable average results, we generated $100$ random replicates of this structure. To investigate a range of configurations, we varied key characteristics, creating 12 distinct scenarios. Each community was modeled as a random undirected graph, with intra link probabilities chosen randomly from the interval $[0.6, 0.8]$. These scenarios were defined by three primary parameters: \textit{Edge Weights} (EW), \textit{Order of the Communities} (OC), and \textit{Connectivity Between Communities} (CBC). For a visual representation of the generated structure, please refer to the Supplementary Materials.
	
	Each parameter is set as follows:
	\begin{enumerate}
		\item \textbf{Edge Weights} ({\bf LW}). All edge weights are equal to 1 (Case \textit{One}); the edge weights are randomly generated within the range $[0.1, 0.9]$ (Case \textit{Random}); edge weights within the communities are generated in the range $[0.3, 0.6]$, while weights between communities are in the range $[0.6, 0.9]$ (Case \textit{InOut}).        
		\item \textbf{Order of the Communities} ({\bf OC}). 
		Each Community has 25 nodes (Case \textit{Equal}); the Communities have a different number of nodes, but the total number of nodes of the graph is $100$ (Case \textit{Different}).
		\item \textbf{Connectivity Between Communities} ({\bf CBC}).
		Communities inter link probability is $0.2$ ( Case \textit{High}), respectively the $0.5$ (Case \textit{High}). 
	\end{enumerate}
	For each simulated scenario, functional attributes are generated using the model:
	\begin{equation}
		\label{eq:conc}
		Y_{\nu_i}(t)=\int_T{X_{\nu_i,k}(s)\beta_{\nu_i,1}(t,\,s)ds}+\epsilon_{\nu_i}(t), \mbox{ } i=1,\ldots,100,
	\end{equation}
	where the basis functions, functional parameters, functional covariates, response values, and errors are defined as follows:
	\begin{itemize}
		\item 
		{\em Basis functions}: For both models,  {$X_{\nu_i}(s)$}, $\beta(t,s, \nu_i)$ , $Y_{\nu_i}(t)$ and $\epsilon_{\nu_i}(t)$  are expanded in terms of the same B-spline basis functions ${\phi} (s)=(\phi_1(s),\ldots, \phi_k(s))^T$, with $K=21$, and  {$s$} and $t \in [0,1]$. 
		\item 
		{\em Functional covariates}: The covariate is defined as {$X_{\nu_i}(s)=\sum_{j=1}^k X_{\nu_ij}\phi_j(s)$}, with $\phi_j(s)$ defined as above. The coefficients {$X_{\nu_ij}$} with $i=1,\ldots,N$ and $j=1,\ldots,K$ are generated by $\textbf{X} \sim {NMV} \left({\mu}_{k \times 1}, \mathbf{I}_{k \times k}\right)$, where ${\mu}_{K \times 1}=(0, \dots, 0)^T$;
		\item 
		{\em Functional parameters}: We set $\beta_{\cdot,1}(t,s)=\sum_{j=1}^K\sum_{h=1}^K\phi_h(s)b_{j,h,1}\phi_j(t)= {\phi}^T (t) \textbf{B}_1 {\phi}^T (s),$ where $\textbf{B}_1 \sim {NMV} \left({\mu}_{K\times 1}, \mathbf{I}_{K \times K}\right)$ is a matrix of B-spline basis generated with ${\mu}_{K \times 1}=(0, \dots, 0)^T$ and we define: $\beta_{\nu_i,1}(t,s)=\beta_{\cdot,1}(t,s)+2, \forall i=1,\dots,25$; $\beta_{\nu_i,1}(t,s)=-\beta_{\cdot,1}(t,s)-2, \forall i=26,\dots,50$; $\beta_{\nu_i,1}(t,s)=2\beta_{\cdot,1}(t,s)-1, \forall i=51,\dots,75$; $\beta_{\nu_i,1}(t,s)=-2\beta_{\cdot,1}(t,s)+1, \forall i=76,\dots,100$.
		\item  
		{\em Response variables}: We use the model (\ref{eq:conc})  to generate the values for the response $Y_{s_i}(t)$  for $i=1,\ldots,n$.
		\item 
		{\em Errors of the model}: {${\epsilon}_V(t)= (\epsilon_{\nu_1}(t),\ldots,\epsilon_{\nu_N}(t)) \sim \rm{NMV}({\mu}_{K \times 1}, 10^{-4}\mathbf{I}_{K \times k})$}, with ${\mu}_{K \times 1}=(0, \dots, 0)^T$.
	\end{itemize}
	We assess the predictive performance of the NWFR model compared to the FR model by analyzing the average results from 100 simulation runs across the different configurations. Table \ref{tab:sim_classic_GOF} presents the average values of key GOF metrics, namely, $RISME$, $R^2$ and $\widetilde{R}^2$ calculated over 12 scenarios for the classic functional regression runs (namely, no network structure is considered), while in Table\ref{tab:sim_NWFR_GOF} are reported the same indices for the NWFR model runs.
	\begin{table}[!h]
		\centering
		\resizebox{\ifdim\width>\linewidth\linewidth\else\width\fi}{!}{
			\begin{tabular}[t]{lllr>{\small\itshape}cc>{\small\itshape}cc>{\small\itshape}c}
				\toprule
				\multicolumn{3}{c}{\textbf{Scenario}}  &\multicolumn{6}{c}{\textbf{GOF average indices and st. dev. $\sigma$}}\\
				\textbf{EW}&\textbf{OC}&\textbf{CBC}  & \textit{av.} $RIMSE$ & $\sigma_{RIMSE}$ &\textit{ av.} $R^2\%$ & $\sigma_{R^2}$ & \textit{av.} $\widetilde{R}^2\%$ & $\sigma_{\widetilde{R}^2}$   \\
				\midrule
				One     &Equal      &Low& 0.339 & 0.028 & 21.29 & 4.17 & 21.36 & 4.89\\
				&           &High& 0.340 & 0.027 & 20.22 & 4.44 & 20.09 & 5.14\\
				&Different  &Low& 0.322 & 0.026 & 24.04 & 5.00 & 23.56 & 5.79\\
				&           &High& 0.318 & 0.026 & 23.91 & 5.25 & 23.48 & 6.06\\
				Random  &Equal      &Low& 0.340 & 0.026 & 21.00 & 4.70 & 21.03 & 5.39\\
				&&High& 0.339 & 0.031 & 21.16 & 4.93 & 21.16 & 5.72\\
				&Different  &Low& 0.320 & 0.030 & 23.88 & 5.36 & 23.37 & 6.02\\
				&&High& 0.319 & 0.031 & 23.89 & 5.67 & 23.36 & 6.43\\
				In-Out  &Equal      &Low& 0.340 & 0.026 & 21.00 & 4.70 & 21.03 & 5.39\\
				&&High& 0.339 & 0.031 & 21.16 & 4.93 & 21.16 & 5.72\\
				&Different  &Low& 0.320 & 0.030 & 23.88 & 5.36 & 23.37 & 6.02\\
				&&High& 0.321 & 0.028 & 24.23 & 5.67 & 23.84 & 6.47\\
				\bottomrule
		\end{tabular}}
		\caption{\label{tab:sim_classic_GOF}Functional regression GOF indices}
	\end{table}

	\begin{table}[!h]
		\centering
		\resizebox{\ifdim\width>\linewidth\linewidth\else\width\fi}{!}{
			\begin{tabular}[t]{lllr>{\small\itshape}cc>{\small\itshape}cc>{\small\itshape}c}
				\toprule
				\multicolumn{3}{c}{\textbf{Scenario}}  &\multicolumn{6}{c}{\textbf{GOF average indices and st. dev. $\sigma$}}\\
				\textbf{EW}&\textbf{OC}&\textbf{CBC}  & \textit{av.} $RIMSE$ & $\sigma_{RIMSE}$ &\textit{ av.} $R^2\%$ & $\sigma_{R^2}$ & \textit{av.} $\widetilde{R}^2\%$ & $\sigma_{\widetilde{R}^2}$   \\
				\midrule
				One     &Equal      &Low	& 0.036 & 0.038 & 98.17 & 3.21 & 98.14 & 3.32\\
				&           &High		& 0.044 & 0.026 & 98.21 & 2.73 & 98.20 & 2.78\\
				&Different  &Low	& 0.031 & 0.037 & 98.31 & 2.52 & 98.28 & 2.60\\
				&           &High		 & 0.046 & 0.028 & 97.80 & 3.04 & 97.74 & 3.23\\
				Random  &Equal  &Low	& 0.096 & 0.030 & 93.26 & 4.53 & 93.20 & 4.79\\
				&&High					 & 0.136 & 0.024 & 86.99 & 4.41 & 86.96 & 4.60\\
				&Different  &Low	 & 0.089 & 0.034 & 93.18 & 6.93 & 93.13 & 7.15\\
				&&High					 & 0.129 & 0.025 & 86.94 & 5.08 & 86.92 & 5.37\\
				In-Out  &Equal&Low		& 0.241 & 0.027 & 60.05 & 7.31 & 59.90 & 7.71\\
				&&High					& 0.058 & 0.030 & 97.14 & 3.92 & 97.12 & 3.97\\
				&Different  &Low	& 0.063 & 0.032 & 96.31 & 3.10 & 96.24 & 3.20\\
				&&High					& 0.071 & 0.033 & 95.45 & 4.12 & 95.36 & 4.26\\
				\bottomrule
		\end{tabular}}
		\caption{\label{tab:sim_NWFR_GOF}NWFR Model GOF indices.}
	\end{table}
	
	For the functional regression model (Tab. \ref{tab:sim_classic_GOF}), the average $R^2$ and $\widetilde{R}^2$ values are relatively low, ranging approximately between 20–24\%, with relatively stable standard deviations (around 4–6\%). The average RIMSE values are around 0.32–0.34, indicating moderate prediction errors across scenarios. These results suggest limited explained variability and moderate predictive accuracy in the classic framework.
	
	In contrast, the NWFR model (Table \ref{tab:sim_NWFR_GOF}) shows a clear improvement in fit quality, with average $R^2$ and $\widetilde{R}^2$ values exceeding 86\% in nearly all scenarios and reaching above 98\% in the best cases. The $RIMSE$ values are substantially lower, typically below 0.1 in most scenarios, demonstrating much lower prediction errors. While some increase in variability is observed under the “Random” and “In-Out” conditions, overall the NWFR model consistently outperforms the classic model, particularly in terms of explained variability and error reduction.
	
	These results highlight the superior accuracy and efficiency of the NWFR approach across diverse network weights configurations and confirm its advantage over the classic functional regression model.

	We assess the performance of conformal prediction (CP) bands applied to the classical functional regression (Tab. \ref{tab:SIM_cp_classic}) and network-weighted functional regression (NWFR) (Tab. \ref{tab:sim_cp_NWFR}) models under the 12 scenarios. The evaluation considers global and local coverage probabilities, average band width (ABW), and integrated scores across different scenarios and the two conformity measures, $\mathcal{D}_2$ and $\mathcal{D}_\infty$. The results are averaged over 100 simulation repetitions to ensure robust estimates. The following tables summarize the average performance indices and their interpretation, we considered just one table for the classical functional regression because, in that case, no network structure is assumed and the results are for all the 12 scenario as expected.
	
	\begin{table}[!h]
		\centering
		\begin{tabular}[t]{ccccc}
			\toprule
			&\multicolumn{4}{c}{Funct. Regression}\\
			$\mathcal{D}_h$ &  $Cov_G\%$ & $Cov_G\%$ & $ABW$ & $S^{int}_\alpha$\\
			\midrule
			$\mathcal{D}_2$ & 77.0 & 92.6 & 1.785 & 2.736 \\
			$\mathcal{D}_\infty$ & 90.7 & 97.5 & 2.313 & 2.653\\
			\bottomrule
		\end{tabular}
		\caption{\label{tab:SIM_cp_classic}Performance indices for Conformal Prediction for Functional Regression. Averages of 100 repetitions on the 12 scenarios.}
	\end{table}

	\begin{table}[!h]
		\centering
		\begin{tabular}[t]{cccccccc}
			\toprule
			\multicolumn{3}{c}{\bf Scenario}&&\multicolumn{4}{c}{NWFR}\\
			EW & OC & CBC & $\mathcal{D}_h$  & $Cov_G\%$ & $Cov_L\%$ & $ABW$ & $S^{int}_\alpha$\\
			\midrule
			&  &  & $\mathcal{D}_2$ & 85.8 & 95.8 & 3.821 & 5.009\\
			
			&  & \multirow{-2}{*}{\centering\arraybackslash Low} & $\mathcal{D}_\infty$ & 96.0 & 98.9 & 5.051 & 5.435\\
			
			&  &  & $\mathcal{D}_2$  & 85.6 & 95.4 & 4.760 & 6.303\\
			
			& \multirow{-4}{*}[0.5\dimexpr\aboverulesep+\belowrulesep+\cmidrulewidth]{\centering\arraybackslash Equal} & \multirow{-2}{*}{\centering\arraybackslash High} & $\mathcal{D}_\infty$ & 95.7 & 98.8 & 6.290 & 6.791\\
			
			&  &  & $\mathcal{D}_2$ & 84.4 & 95.4 & 3.034 & 4.163\\
			
			&  & \multirow{-2}{*}{\centering\arraybackslash Low} & $\mathcal{D}_\infty$  & 95.7 & 98.7 & 4.064 & 4.414\\
			
			&  &  & $\mathcal{D}_2$ & 84.2 & 95.4 & 4.666 & 6.253\\
			
			\multirow{-8}{*}[1.5\dimexpr\aboverulesep+\belowrulesep+\cmidrulewidth]{\centering\arraybackslash One} & \multirow{-4}{*}[0.5\dimexpr\aboverulesep+\belowrulesep+\cmidrulewidth]{\centering\arraybackslash Different} & \multirow{-2}{*}{\centering\arraybackslash High} & $\mathcal{D}_\infty$  & 96.0 & 99.0 & 6.302 & 6.767\\
			\cmidrule{1-8}
			&  &  & $\mathcal{D}_2$  & 82.9 & 95.1 & 5.538 & 7.444\\
			
			&  & \multirow{-2}{*}{\centering\arraybackslash Low} & $\mathcal{D}_\infty$  & 95.7 & 98.8 & 7.255 & 7.838\\
			
			&  &  & $\mathcal{D}_2$ & 83.3 & 95.0 & 6.654 & 8.975\\
			
			& \multirow{-4}{*}[0.5\dimexpr\aboverulesep+\belowrulesep+\cmidrulewidth]{\centering\arraybackslash Equal} & \multirow{-2}{*}{\centering\arraybackslash High} & $\mathcal{D}_\infty$ & 95.6 & 98.8 & 8.539 & 9.362\\
			
			&  &  & $\mathcal{D}_2$  & 80.8 & 95.1 & 5.155 & 7.301\\
			
			&  & \multirow{-2}{*}{\centering\arraybackslash Low} & $\mathcal{D}_\infty$  & 95.7 & 99.1 & 6.972 & 7.674\\
			
			&  &  & $\mathcal{D}_2$  & 80.5 & 94.9 & 6.160 & 8.890\\
			
			\multirow{-8}{*}[1.5\dimexpr\aboverulesep+\belowrulesep+\cmidrulewidth]{\centering\arraybackslash Random} & \multirow{-4}{*}[0.5\dimexpr\aboverulesep+\belowrulesep+\cmidrulewidth]{\centering\arraybackslash Different} & \multirow{-2}{*}{\centering\arraybackslash High} & $\mathcal{D}_\infty$  & 95.7 & 99.0 & 8.198 & 9.233\\
			\cmidrule{1-8}
			&  &  & $\mathcal{D}_2$  & 81.5 & 94.8 & 7.496 & 10.716\\
			
			&  & \multirow{-2}{*}{\centering\arraybackslash Low} & $\mathcal{D}_\infty$ & 95.6 & 98.9 & 9.641 & 11.040\\
			
			&  &  & $\mathcal{D}_2$ & 84.4 & 95.2 & 5.386 & 7.121\\
			
			& \multirow{-4}{*}[0.5\dimexpr\aboverulesep+\belowrulesep+\cmidrulewidth]{\centering\arraybackslash Equal} & \multirow{-2}{*}{\centering\arraybackslash High} & $\mathcal{D}_\infty$ & 95.6 & 98.7 & 6.991 & 7.571\\
			
			&  &  & $\mathcal{D}_2$  & 81.7 & 95.0 & 4.914 & 6.839\\
			
			&  & \multirow{-2}{*}{\centering\arraybackslash Low} & $\mathcal{D}_\infty$  & 95.5 & 99.0 & 6.605 & 7.197\\
			
			&  &  & $\mathcal{D}_2$ & 81.6 & 95.2 & 5.333 & 7.380\\
			
			\multirow{-8}{*}[1.5\dimexpr\aboverulesep+\belowrulesep+\cmidrulewidth]{\centering\arraybackslash In Out} & \multirow{-4}{*}[0.5\dimexpr\aboverulesep+\belowrulesep+\cmidrulewidth]{\centering\arraybackslash Different} & \multirow{-2}{*}{\centering\arraybackslash High} & $\mathcal{D}_\infty$  & 95.7 & 99.0 & 7.141 & 7.822\\
			\bottomrule
		\end{tabular}
		\caption{\label{tab:sim_cp_NWFR}Performance indices for Conformal Prediction for NWFR (Network Weighted Functional Regression). Averages on 100 repetitions.}
	\end{table}

	For the functional regression model (Tab. \ref{tab:SIM_cp_classic}), coverage at the global level ($Cov_G\%$) and local level ($Cov_L\%$) increases when using $\mathcal{D}_\infty$ compared to $\mathcal{D}_2$, at the cost of slightly larger average band width (ABW). The average integrated score $S^{int}_\alpha$ remains comparable between metrics, reflecting a balance between coverage and efficiency.
	
	For the NWFR model (Tab. \ref{tab:sim_cp_NWFR}), a similar pattern is observed: $\mathcal{D}_\infty$ consistently yields higher global and local coverage across all scenarios, while $\mathcal{D}_2$ offers narrower prediction bands (lower ABW), reflecting higher efficiency but slightly lower coverage. The results also show that NWFR achieves substantially better coverage (typically exceeding 95\%) compared to the functional regression model, especially when $\mathcal{D}_\infty$ is used, although at the cost of wider bands.
	
	Overall, the results highlight the well-known coverage-efficiency trade-off in conformal prediction, with $\mathcal{D}_\infty$ prioritizing coverage and $\mathcal{D}_2$ offering better band efficiency. NWFR, in particular, provides substantial improvements in predictive reliability across complex spatial scenarios.

	\section{Application on real data}
	\label{sec:06}	
	The micro-climate of work environment is a topic that falls in the framework of physical risks in the workplace. As numerous studies have shown, adverse micro-climatic conditions, such as humidity, significant temperature fluctuations, or air currents, can exert a negative impact not only on the health of employees but also on their work performance (\cite{Andersson:06}, \cite{ Vimalanathan:14}). 			
	The effects of incorrect micro-climatic conditions on individuals can range from a mere perception of thermal discomfort to more profound impacts on vital functions and work-related activities (\cite{Kenny:08}, \cite{Parsons:03}, \cite{Sawka:84}). Maintaining an adequate micro-climate is considered an essential requirement for ensuring the health and safety of workers. 
	Predicting and assessing alterations in the micro-climate within a working environment has become essential for upholding health and safety standards in the workplace (\cite{Fang:98},  \cite{Hollands:22}). To this end, the Intel indoor dataset\footnote{The dataset is freely available the following URL address:\\ \url{http://db.csail.mit.edu/labdata/labdata.html}} emerges as one of the most relevant and practical sources of real-world data for the study of micro-climates in working environments in the Intel Berkeley Research lab. The dataset contains a total of 2.3 million readings collected by 54 sensors. Each reading includes information such as: (\textit{date}) Day of the measurement in format month-day-year (from Feb. 28th to Apr. 5th); (\textit{timestamp}) time of the recording in format HH:MM:SS; \textit{node id}: Id of the sensor; (\textit{temperature}) temperature recorded in Celsius degrees; (\textit{humidity}) relative Humidity, ranging in 0-100\%; (\textit{light}) the intensity of Lux measured in $\log(\text{Lux})$; (\textit{voltage}) Voltage ranging from 2-3 volts (Figure \ref{fig:var}).
	\begin{figure}[htbp]
		a)\includegraphics[scale=.15]{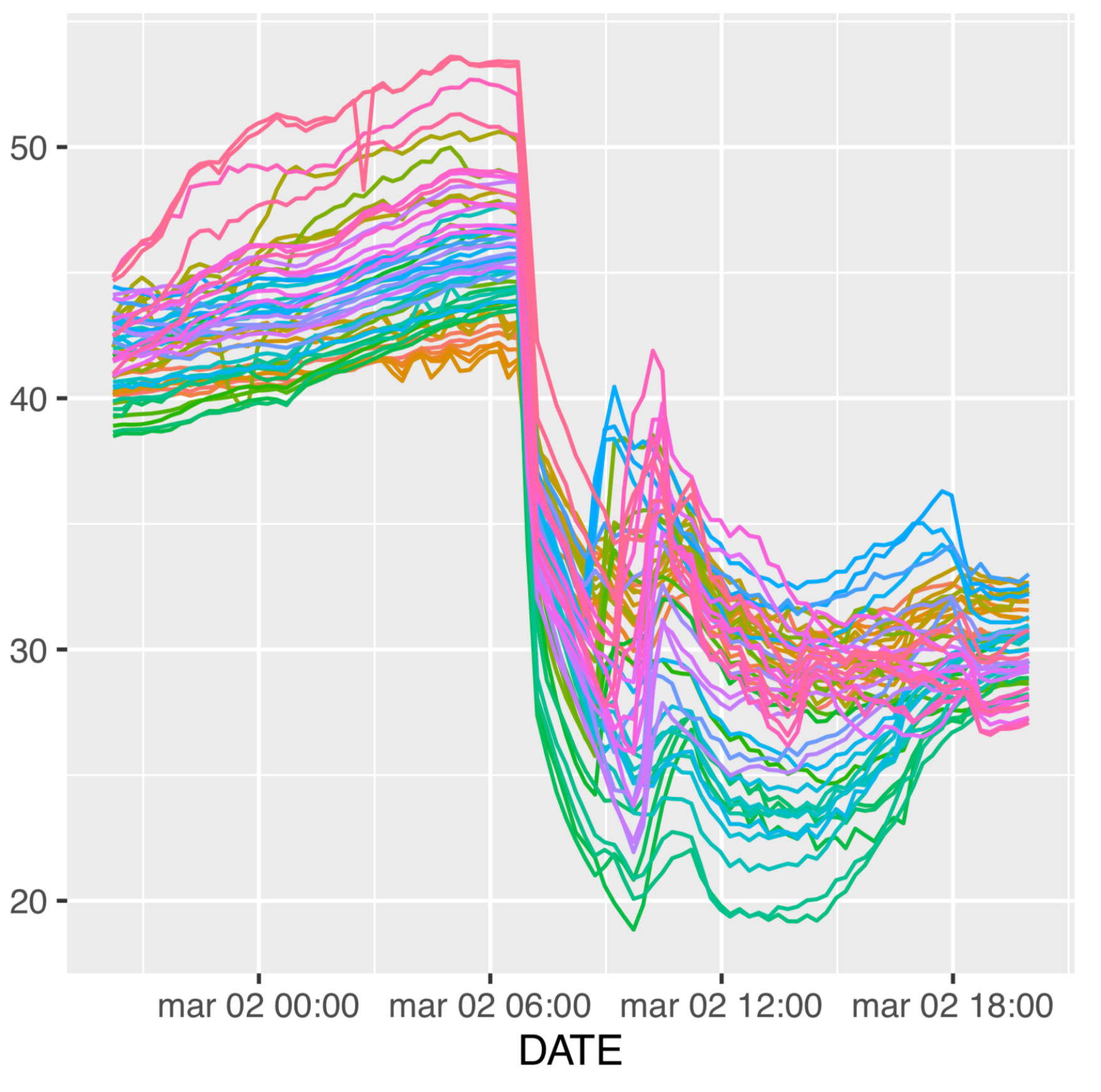}
		b)\includegraphics[scale=.15]{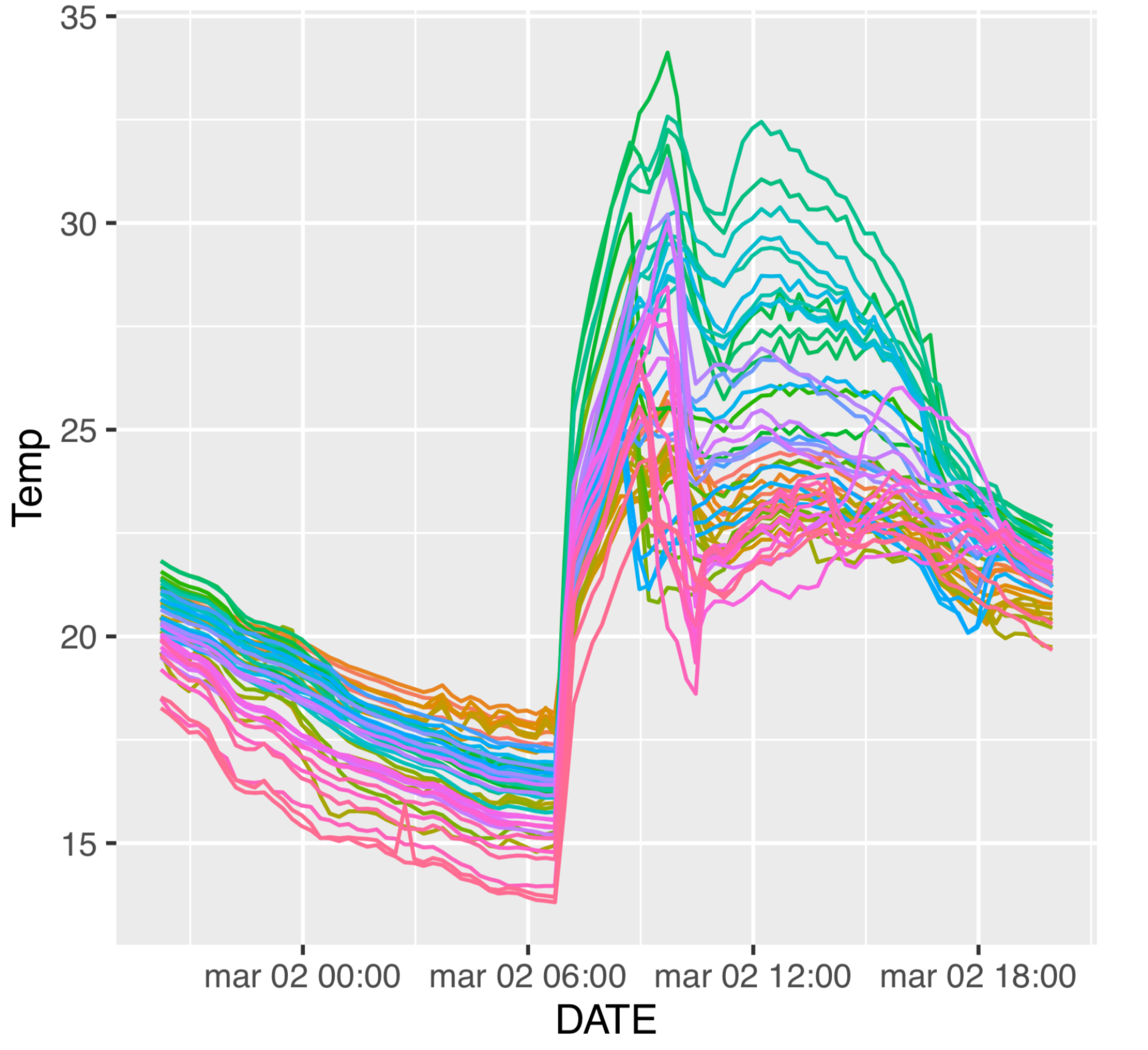}
		c)\includegraphics[scale=.15]{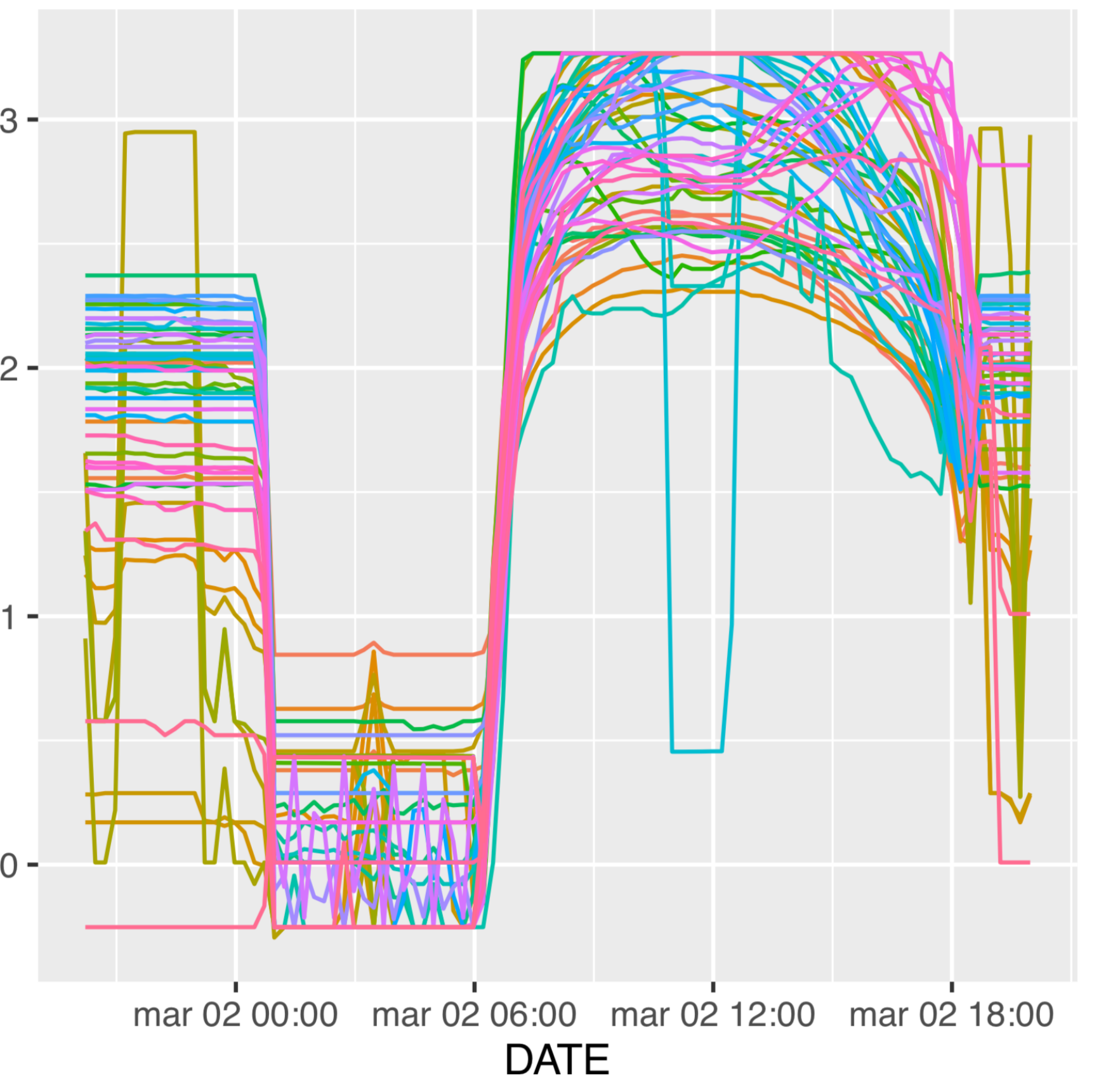}
		\caption{The environmental variables: Relative Humidity (a), Temperature (b) and log(Lux) (c).}
		\label{fig:var}
	\end{figure}
	
	Readings were collected by  using \verb|Mica2Dot| sensors during a period spanning from February 28th to April 5th, 2004 every 31 seconds. Final data was obtained using the \verb|TinyDB| in-network query processing system, which is built on the \verb|TinyOS| platform. Authors provided data without a quality check. For this reason, we consider only data from (Starting Date) to (Ending Date) where neither missing nor anomalous data.
	
	The sensors' data also includes the devices' coordinates, which are expressed in meters and relative to the upper right corner of the lab, and a connectivity probability between each pair of sensors.
	Our main focus in this study is to evaluate the performance of the proposed NWFR model in predicting micro-climate changes in the lab by studying the impact of temperature and light on the relative humidity, in particular we focus on analysing the relation $HUM \sim TEMP + LUX$ from 21:00 of the 1st of Mar. 2004 to 21:00 of the 2nd of Mar 2004.
	Exploratory analysis of the Intel dataset highlighted  there are $2$ sensors that have missing readings. We have removed these sensors and considered the remaining $52$ sensors for further analysis. $K-$ Nearest Neighbors algorithm with $k=3$ has been used as a further instrument to fill in the gaps. 
	After this preprocessing step,  the data has been divided into four different weeks and the analysis has been focused on time windows of 15 minutes length. Data are also provided with information about the average probability that the signal from a sensor is correctly received from another one. We considered this information as a measure of the quality of connectivity between two sensors. This measure has been used for the definition of the weights on the edges between vertices in a network, where vertices represent the sensors. 		 
	Formally let $G=(\mathcal{V},\mathcal{E},F_\mathcal{V},F_\mathcal{E})$ be a the network sensor, where: $\mathcal{V}$ is the set of sensors; $\mathcal{E} \subseteq \mathcal{V}\times \mathcal{V}$  are the links between (all the) sensors\footnote{We assume that the sensors network is shaped as a complete graph.}; $F_\mathcal{V}:\mathcal{V}\rightarrow  \mathcal{L}_2(T)^3$ are the functional attributes  $HUM$, $LUX$ and $TEMP$ observed in each node\footnote{We have assumed bspline basis (the optimal number of basis functions was determined by cross validation \cite{Ramsay:05}).}; $F_\mathcal{E}:\mathcal{E}\rightarrow \mathbb{R}^+$ are weights associated with each edge defined as the probability of one sensor of receiving the signal from the other, formally $f_{i,j}=-log(p_{r_{ij}})$ where $p_{r_{ij}}$ is the probability of sensor $i$ of receiving the signal from sensor $j$ coming from the connectivity probability provided in the dataset. Since in the data comes with the spatial coordinates of the sensors, we compare the proposed NWFR model with the GWFR model, where the weights depend solely on the spatial location.
	The regression coefficient functions $\beta_1(t, s, \nu_{24})$ and $\beta_2(t, s, \nu_{24})$ for a chosen sensor (number 24) obtained by the NWFR and GWFR models show the main characteristics of the phenomena.
	\begin{figure}
		\centering
		a)\includegraphics[scale=0.07]{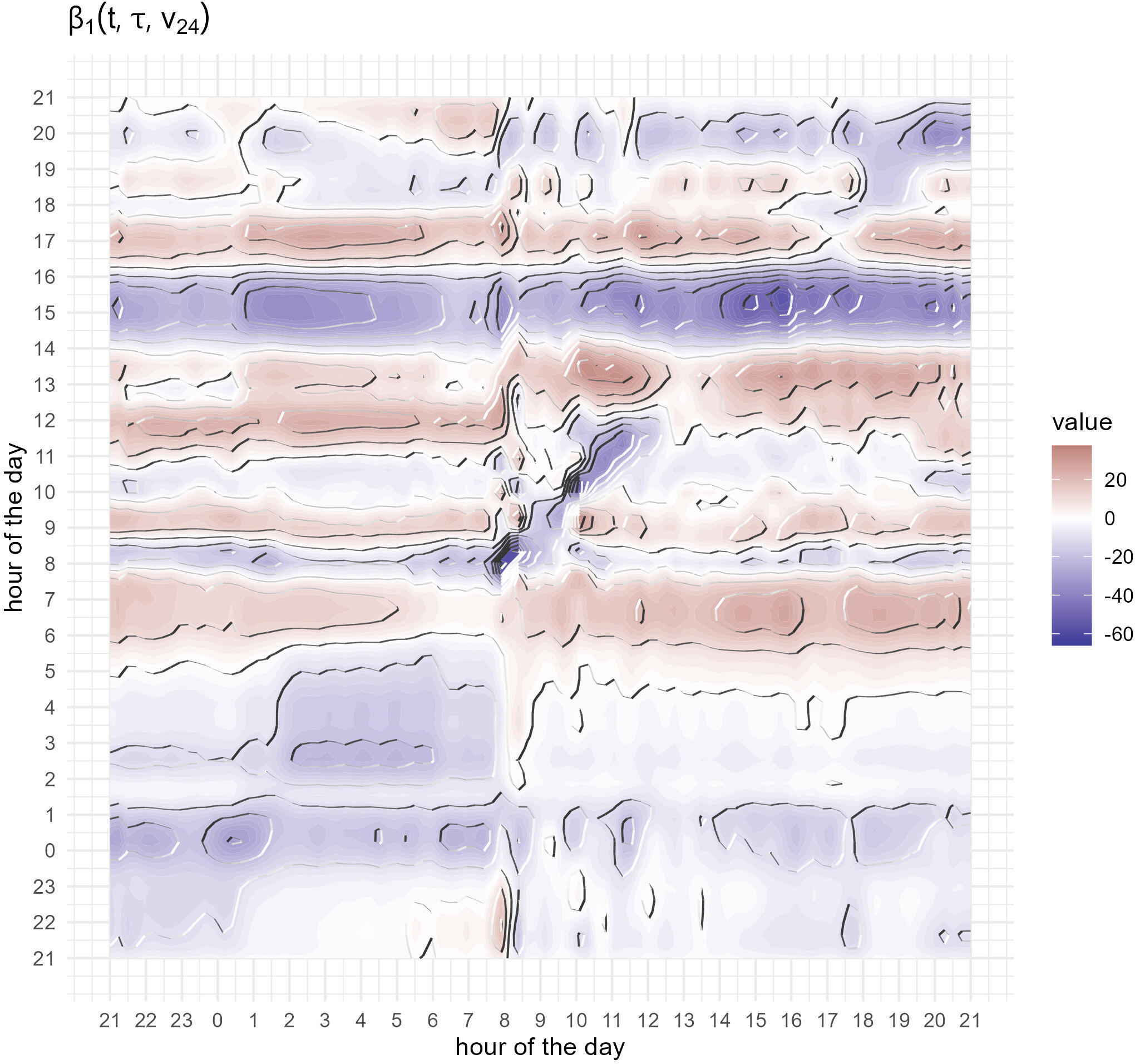}
		b)\includegraphics[scale=0.07]{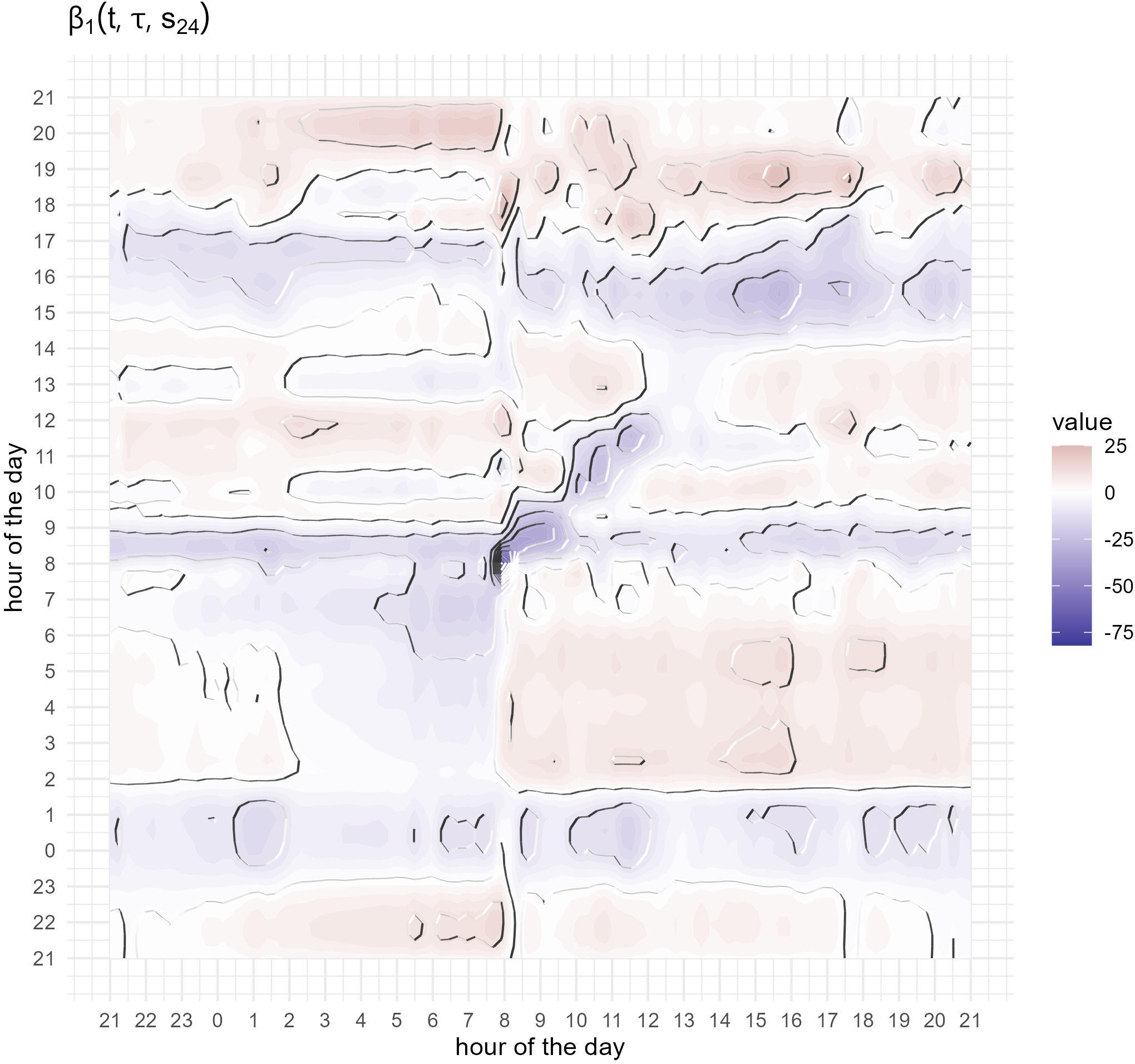}
		\\
		c)\includegraphics[scale=0.07]{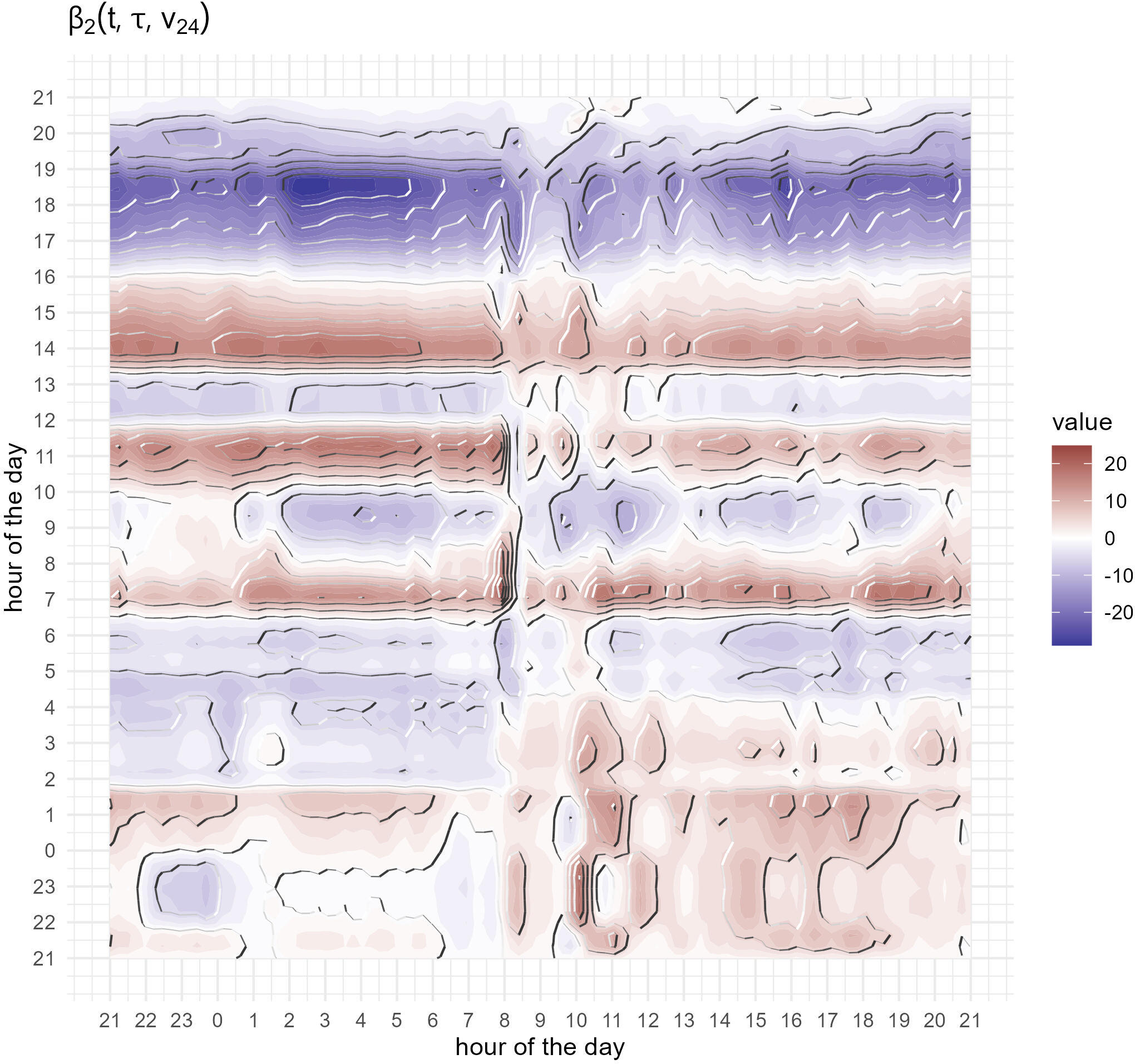}
		d)\includegraphics[scale=0.07]{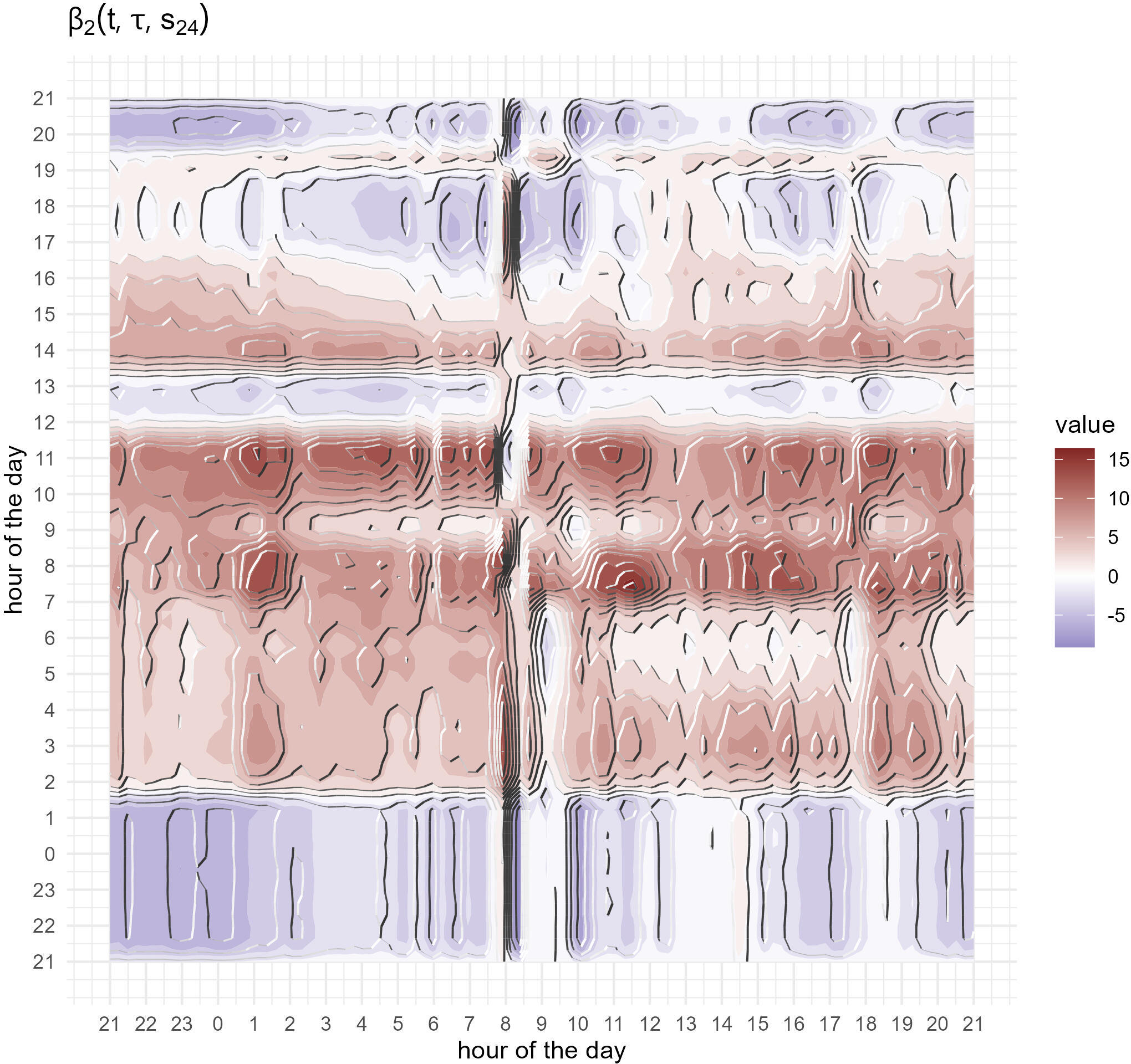}
		\caption{The regression coefficient functions $\beta_1(t, s, \nu_{24})$ (up) and $\beta_2(t, s, \nu_{24})$ (down) respectively for the NWFR (left) and GWFR (right).}
		\label{fig:02}
	\end{figure}		
	We observe that $TEMP$ throughout the day has different effects on $HUM$. Focusing on  $\beta_1(t, s, \nu_{24})$ obtained by NWFR, panel  $a)$ in Figure \ref{fig:02} , we note: a negative effect between $14:30$ and $16:00$, a moderate negative effect between the periods: $21:00-05:00$, $07:30-08:30$,  $09:30-11:00$ and finally $17:30-21:00$, a moderate positive effect between the periods: $05:00-07:30$,  $08:30 - 09:30$, $11:00 - 14:30$, and finally $16:00 - 17:30$. For instance, when $LUX$ is set between $14:30-16:00$, an increase of $+1^{\circ}C$ in $TEMP$ causes a decrease of $HUM$ ranging from $-40\%$ to  $-20\%$. However, if  $LUX$ is fixed between the intervals $05:00-07:30$,  $08:30 - 09:30$, $11:00 - 14:30$, or  $16:00 - 17:30$ an increase of $+1^{\circ}C$ results in an increase of $HUM$  between $0\%$ and $+20\%$. Also $LUX$, throughout the day, has different effects on $HUM$ and focusing on $\beta_2(t, s, \nu_{24})$  obtained by NWFR, panel  $c)$ in Figure \ref{fig:02}, we note: a strong negative effect between 16:00 and 21:00; a moderate negative effect between the periods: 02:00-06:30,  08:00-10:00 and finally 12:00-13:30; a moderate positive effect between 21:00 and 02:00; a positive and strong positive effect between the periods: 10:00-12:00, 13:30-16:00. When $TEMP$ is fixed between 16:00 and 21:00, a unitary $Lux$ increase results in a decrease in $HUM$ ranging from $-25\%$ to $-15\%$. In contrast, when $TEMP$ is fixed between 10:00-12:00,  13:30-16:00, a unitary $Lux$ increase leads to an increase in $HUM$ between $+10\%$ to $+20\%$.
	
	The accuracy of the classical functional regression model (namely, without considering the network structure), NWFR and GWFR models can be evaluated by considering the $RIMSE$, $R^2$ and $\widetilde{R}^2$ indices, which are shown in Tab. \ref{tab:lab_GOF}.  
	
	\begin{table}[H]
		\centering
		
		\label{tab:lab_GOF}
		\begin{tabular}{lrrr}
			\toprule
			Models & RIMSE & $R^2$ & $\widetilde{R}^2$ \\
			\midrule
			class. fr & 0.3439 & 0.986 & 0.987 \\
			GWFR      & 0.0018 & 1.000 & 1.000 \\
			NWFR      & 0.0013 & 1.000 & 1.000 \\
			\bottomrule
		\end{tabular}\caption{Lab. data: Model goodness-of-fit metrics}
	\end{table}
	
	Table~\ref{tab:lab_GOF} shows that the GWFR and NWFR models achieve near-perfect $R^2$ and  $\widetilde{R}^2$ values (equal to 1.000), along with very low RIMSE values. While these metrics indicate an excellent fit to the data, they may also suggest overfitting. To assess the generalizability of these models, we additionally conducted a permutation test.
	Table~\ref{tab:lab_pvals} summarizes the $p$-values of the regression coefficients of the NWFR and GWFR estimated models.
	\begin{table}[H]
		\centering
		\begin{tabular}{lrr}
			\toprule
			Models & $\beta_{TEMP}$ & $\beta_{LUX}$ \\
			\midrule
			GWFR & 0.012 & 0.292 \\
			NWFR & 0.129 & 0.408 \\
			\bottomrule
		\end{tabular}
		\caption{Lab. data: p-values of the geographical and network structure effect on the $\beta$ parameters.}
		\label{tab:lab_pvals}
	\end{table}
	Note that NWFR model can be considered as a special case of NWFR, in this case it seems that only $\beta_{TEMP}$ is significantly affected by the network structure, while the effect on the NWFR is sligtly moderate. $\beta_{LUX}$ variability seems to be less affected by the network topology. The predicted functional attributes for each vertex of the network can be evaluated through the split-conformal approach described above. We evaluate the empirical performances of the defined non-conformity measures $\mathcal{D}_{2}$ and $\mathcal{D}_\infty$, using $\alpha=0.05$. The evaluation includes empirical local and global coverage ( $Cov_{L}$ and  $Cov_G$), the average width of predicted intervals ( $ABW$), and the Interval Score ($S^{int}_{\alpha}$) as measures of efficiency.
	Results are summarized in Tab. \ref{tab:lab_pvals}.
	\begin{table}[H]
		\centering
		\begin{tabular}{llrrrr}
			\toprule
			Models & Conf. measure&$Cov_G\%$ & $Cov_L\%$& $ABW$& $S^{int}_{\alpha}$ \\
			\midrule
			Classical fr&$\mathcal{D}_{2}$ & 69.23 & 88.19 & 4.548 & 10.709 \\
			&$\mathcal{D}_\infty$& 82.69 & 93.10 & 6.158 & 9.150 \\
			\midrule
			GWFR &$\mathcal{D}_{2}$& 73.08 & 95.04 & 10.968 & 13.135 \\
			&$\mathcal{D}_\infty$& 92.31 & 99.67 & 16.890 & 16.960 \\
			\midrule
			NWFR &$\mathcal{D}_{2}$& 53.85 & 92.85 & 7.908 & 10.249 \\
			&$\mathcal{D}_\infty$& 94.23 & 99.67 & 12.374 & 12.491 \\
			\bottomrule
		\end{tabular}
		\caption{Lab.data: split-conformal prediction evaluation for the GWFR and NWFR models.}
		\label{tab:lab_pvals}
	\end{table}
	We observe that global coverage generally increases when employing the GWFR and NWFR models, except in the case of NWFR when the distance metric $\mathcal{D}_2$ is used. Examining the local coverage measure ($Cov_L$), we note that it generally improves when moving from the classical model to the GWFR and NWFR approaches, with a particularly pronounced improvement when $\mathcal{D}_\infty$ is employed.
	
	Regarding efficiency-related metrics, the use of $\mathcal{D}_2$ results in superior efficiency, as indicated by the generally lower values of $ABW$ and $S^{int}_\alpha$ compared to those obtained with $\mathcal{D}_\infty$. Given the inherent properties of $\mathcal{D}_2$ and $\mathcal{D}\infty$, these findings were anticipated.
	
	In summary, we conclude that while $\mathcal{D}_2$ yields greater efficiency relative to $\mathcal{D}_\infty$, it does so at the cost of slightly reduced coverage. This outcome aligns with the well-known \textit{coverage-efficiency} trade-off frequently discussed in the literature.

	\section{Concluding remarks} 
	\label{sec:07}
	
	In this work, we have introduced the Network-Weighted Functional Regression (NWFR) model, a novel extension of the classical functional regression framework that incorporates network structure through a weighting scheme based on the underlying graph topology. This approach enables the modeling of complex dependencies across networked observations, leading to improved predictive performance and enhanced interpretability in the analysis of functional data observed on networks.
	
	To provide uncertainty quantification, we developed a conformal prediction procedure tailored to the NWFR framework, delivering distribution-free prediction bands with guaranteed finite-sample marginal coverage. Through extensive simulations and an application to real-world environmental sensor data, we have demonstrated that explicitly accounting for the network structure yields substantial gains in predictive accuracy and in the validity of prediction intervals, relative to classical functional regression approaches.
	
	An important finding emerging from our analysis concerns the role of the conformity measure in balancing the validity-efficiency trade-off inherent in conformal prediction. While we considered commonly used nonconformity scores, the systematic study and optimization of conformity measures for functional data on networks remains an open problem of considerable practical and theoretical interest. To the best of our knowledge, no comprehensive study addressing this issue is currently available in the literature, and we identify this as a promising direction for future research.
	
	Further extensions of the NWFR framework could include the integration of additional network characteristics, such as node or edge pecific covariates, centrality measures, or temporal dynamics. The application of NWFR to other domains with complex network dependencies, such as social, biological, or transportation networks, also represents an important avenue for future investigation.
	
	%
	%
	
	\section*{Declarations}
	\paragraph{Conflict of interest} 
	The authors declare no conflict of interest.
	
	\bibliographystyle{abbrvnat}
	\bibliography{main}
	
\end{document}